\newtheorem{proposition}{Proposition}
\theoremstyle{definition}
\newtheorem{example}{Example}
\newtheorem{define}{Definition}
\theoremstyle{remark}
\newtheorem{rem}{Remark}
\newcommand{\pinner}{\mathbin{\mathchoice
   {\hbox{\vrule width0.6em depth0pt height0.4pt
   \vrule width0.4pt depth0pt height0.8ex}}
   {\hbox{\vrule width0.6em depth0pt height0.4pt
   \vrule width0.4pt depth0pt height0.8ex}}
   {\hbox{\kern0.14em
   \vrule width0.48em depth0pt height0.4pt
   \vrule width0.4pt depth0pt height0.6ex\kern0.14em}}
   {\hbox{\kern0.1em
   \vrule width0.39em depth0pt height0.4pt
   \vrule width0.4pt depth0pt height0.5ex\kern0.1em}}}}
\newcommand{\inner}{\pinner\,}
\newcommand{\sD}{\mathcal{D}}
\newcommand{\cD}{\mathcal{D}}
\DeclareMathOperator{\tr}{tr}
\DeclareMathOperator{\str}{str}
\DeclareMathOperator{\Mat}{Mat}
\DeclareMathOperator{\Ann}{Ann}
\DeclareMathOperator{\sdet}{sdet}
\newcommand{\tu}{\tilde{u}}
\newcommand{\txi}{\tilde{\xi}}
\newcommand{\tv}{\tilde{v}}
\newcommand{\cE}{\mathcal{E}}
\newcommand{\cEinf}{\mathcal{E}^{\infty}}
\newcommand{\cC}{\mathcal{C}}
\newcommand{\veps}{\varepsilon}
\newcommand{\BBR}{\mathbb{R}}
\newcommand{\BBC}{\mathbb{C}}
\newcommand{\BBZ}{\mathbb{Z}}
\newcommand{\bu}{\boldsymbol{u}}
\newcommand{\dd}{\partial}
\newcommand{\Id}{{\mathrm d}}
\newcommand{\fg}{\mathfrak{g}}
\newcommand{\bF}{\bar{F}}
\newcommand{\gl}{\mathfrak{gl}}
\newcommand{\gsl}{\mathfrak{sl}}
\newcommand{\by}[1]{\textit{{#1}}}
\newcommand{\jour}[1]{\textit{{#1}}}
\newcommand{\vol}[1]{\textbf{{#1}}}
\newcommand{\book}[1]{\textrm{{#1}}}
\newcommand{\bbl}{\underline{\boldsymbol{[}}}
\newcommand{\bbr}{\underline{\boldsymbol{]}}}
\newcommand{\bone}{\bar{1}}
\newcommand{\bzero}{\bar{0}}
\newcommand{\TD}[2][{}]{\tfrac{{\mathrm{d}}^{#1}}{\mathrm{d}{#2}^{#1}}}
\newcommand{\TDm}[2][{}]{\tfrac{{\mathrm{d}}^{|#1|}}{\mathrm{d}#2^{#1}}}
\newcommand{\half}{\textstyle{\frac{1}{2}}}
\title[Gardner's deformations of the graded KdV equations revisited]%
{Gardner's deformations of the graded\\ 
Korteweg\/--\/de Vries equations revisited}
\author[A.~V.~Kiselev]{A.~V.~Kiselev${}^{\dag}$}
\thanks{${}^{\dag}$
  \textit{Address}:
  Johann Bernoulli Institute for Mathematics and Computer Science,
  University of Groningen,
  P.O.Box~407, 9700AK Groningen, The Netherlands.\quad
  \textit{E-mail}: \texttt{A.V.Kiselev\symbol{"40}rug.nl}}
\author[A.~O.~Krutov]{A.~O.~Krutov$^{\ddag}$}  
\thanks{${}^{\ddag}$%
Corresponding author.\quad
  \textit{Address}: %
  Department of Higher Mathematics, Ivanovo State Power
  University, Rabfa\-kov\-skaya str.~34, Ivanovo, 153003 Russia.
\quad \textit{E-mail}:
\texttt{krutov\symbol{"40}math.ispu.ru}
}
\date{August 2, 2011}
\subjclass[2010]
{
37K10, 
   secondary
35Q53, 
81T60; 
PACS 20.30.Jk, 11.30.Pb, 12.60.Jv
}
\keywords{Gardner's deformations, zero\/-\/curvature representations,
Korteweg\/--\/de Vries equation, supersymmetry}
\begin{document}
\begin{abstract}
We solve 
the Gardner deformation problem 
for the $N{=}2$ supersymmetric $a{=}4$ Korteweg\/--\/de Vries equation
(P.~Mathieu, 1988). We show that a known zero\/-\/curvature
representation for this superequation
yields the system of
new nonlocal variables such that their derivatives
contain the Gardner deformation for the classical KdV equation.
\end{abstract}
\maketitle


\subsection*{Introduction}
The classical problem of construction of the Gardner
deformation~\cite{Miura68} for an infinite\/-\/dimensional completely
integrable system of evolutionary partial differential equations
essentially amounts to finding a recurrence relation between the
integrals of motion. For the $N{=}2$ supersymmetric generalizations of
the Korteweg\/--\/de Vries equation~\cite{MathieuNew,MathieuOpen},
the deformation problem was posed when the integrable triplet of such
super\/-\/systems was discovered. Various attempts to solve it were
undertaken since then (e.g., see~\cite{MathieuNew}) but the progress
was limited. The first solution for the $N{=}2$, $a{=}4$ SKdV in the
triplet $a\in\{-2,1,4\}$ was achieved in~\cite{HKKW}; in that paper, we
stated the `no\/-\/go' theorem about the impossibility to deform
this super\/-\/equation in terms of the superfield (but not the
impossibility to deform it by treating the components of the
super\/-\/field separately, see below), c.f.~\cite{MathieuNew}. We then
presented the two\/-\/step solution of the deformation problem:
We obtained the Gardner deformation for the Kaup\/--\/Boussinesq
equation, which is the bosonic limit of the super\/-\/equation that
precedes the $N{=}2$, $a{=}4$ super\/-\/KdV in its hierarchy.
We thus derived the recurrence relation between the Hamiltonians of the
bosonic limit hierarchy and then we showed how each conserved density
is extended to the super\/-\/density for the $N{=}2$
super\/-\/system. In other words, we 
deformed the bosonic subsystem of the super\/-\/equation at hand within
the frames of the classical scheme~\cite{Miura68}, whence we recovered
the full $N{=}2$ supersymmetry\/-\/invariance.

In this paper we re\/-\/address, from a principally different
viewpoint, the Gardner deformation problem for a vast class of (not
necessarily supersymmetric) KdV\/-\/like systems. Namely,
in~\cite{HKKW} we emphasized the geometric likeness of the Gardner
deformations and zero\/-\/curvature representations, each of them
manifesting the integrability of nonlinear systems. Indeed, both
constructions generate infinite sequences of nontrivial integrals of
motion. We stress that, in general, the two constructions are
\textsl{not} equivalent, although they provide similar results. It is
precisely this correspondence which we study in this paper for the
graded KdV systems.

Developing further the approach of~\cite{RoelofsThesis},
we reformulate the
Gardner deformation problem for the graded extensions of the KdV
equation in terms of constructing parameter\/-\/dependent families
of new bosonic and fermionic variables. We require that the
`nonlocalities' possess two defining properties
(\cite{KK2000,TMPh2006}): on one hand, they should reproduce the
classical Gardner deformation from~\cite{Miura68} under the shrinking
of the $N{=}2$ super\/-\/equation back to the KdV equation. On the
other hand, we consider the nonlocalities that encode the
parameter\/-\/dependent zero\/-\/curvature representations for the
super\/-\/systems at hand. In this reformulation, we solve
P.~Mathieu's Open problem~2 of~\cite{MathieuOpen} for the $N{=}2$
supersymmetric $a{=}4$\/-\/KdV equation. However, our approach is
applicable to a much wider class of completely integrable
(super-)\/systems.

This paper is structured as follows. We first fix some notation and
compare the notions of Gardner's deformations and zero\/-\/curvature
representations by using their unifying description in terms of
nonlocal structures for partial differential equations~(PDE). In
section~\ref{SecGraded} we proceed with this correspondence for
$\BBZ_2$-\/graded systems of evolutionary PDE and solve the Gardner
deformation problem for the $N{=}2$, $a{=}4$ SKdV~\eqref{SKdV}.
The nature of the new solution is geometric and it presents an
alternative to the analytic two\/-\/step algorithm that works for
graded systems and which we reported earlier in~\cite{HKKW}.




\section{Preliminaries}\label{Notation}
\noindent%
In this section we fix some notation, which
follows~\cite{MarvanDIPS22001,Jimbo,BVV}.
Let $\varSigma^{n}$ be an $n$--dimensional manifold, $1\leqslant n<\infty$,
let $\pi\colon E^{m+n}\to\varSigma^{n}$ be a
vector bundle over $\varSigma^{n}$ of fiber
dimension $m$. Consider the jet space $J^{\infty}(\pi)$ of sections of
the vector bundle $\pi$.
The local coordinates on $J^{\infty}(\pi)$ are composed by the
coordinates $x^i$ on $\varSigma^{n}$, coordinates $u^j$ along the fibers
of $\pi$, and coordinates $u^j_{\sigma}$ along the fibers of
$J^{\infty}(\pi)\to \varSigma$; here $\sigma$ is a multi-index and
$u^j_{\varnothing}\equiv u^j$.
The commuting vector fields $\TDm[\sigma]{x} = \partial/\partial x_i +
\sum_{k,\sigma}u^k_{\sigma i}\partial/\partial u^k_{\sigma}$ on
$J^{\infty}(\pi)$ are called \textsl{total derivatives}.
The operators $\TDm[\sigma]{x}$ that act on the space of smooth
functions on $J^{\infty}(\pi)$ are defined inductively by the formula
$\TDm[\sigma i]{x} = \TDm[\sigma]{x}\circ \TD{x^i}$.

Consider a system $\cE$ of $r$ parital differential equations,
\[
\cE = \left\{ F^{\ell}(x_i, u, \dots, u_{I}^k, \dots) = 0,\quad
  \ell=1,\dots,r \right\}.
\]
The system $\{F^\ell=0\}$ and all its differential consequences
$\TDm[\sigma]{x}F^{\ell}=0$, $|\sigma|\geq1$ generate the
infinite prolongation of~$\cE$, which we denote by~$\cEinf$.
The restrictions of $\TD{x^i}$ on $\cEinf$ determine the Cartan
distribution $\cC$ on the tangent space~$T\cEinf$.
Here and in what follows, the notation $\TD[x]{x}$ stands for the
restrictions of the total derivatives onto~$\cEinf$.
There is the decomposition of the tangent space $T\cEinf$ to the
direct sum of the horizontal (the Cartan distribution) and the vertical
vector spaces, $T\cEinf = \cC \oplus V\cEinf$.
Let $\Lambda^{1,0}(\cEinf)= \Ann \cC$ and
$\Lambda^{0,1}(\cEinf) = \Ann V\cEinf$ be
the $C^\infty(\cEinf)$-\/modules of contact $1$-\/forms and
horizontal $1$-\/forms vanishing on \(\cC\) and \(V\cEinf\),
respectively.
Let $\Lambda^r(\cEinf)$ denote the $C^\infty$--module 
of $r$-forms on  $\cEinf$. We have the decomposition
$\Lambda^r(\cEinf) = \bigoplus_{q+p=r}\Lambda^{p,q}(\cEinf)$, where
$\Lambda^{p,q}(\cEinf) =
\bigwedge^p\Lambda^{1,0}(\cEinf)\wedge\bigwedge^q\Lambda^{0,1}(\cEinf)$.
According to this decomposition, the exterior differential splits
to the sum  $d = \bar{d} + d_{\cC}$ of the horizontal differential $\bar{d}:
\Lambda^{p,q}(\cEinf)\to\Lambda^{p,q+1}(\cEinf)$ and the vertical
differential $d_{\cC}:\Lambda^{p,q}(\cEinf) \to
\Lambda^{p+1,q}(\cEinf)$. 

The differentials $d_{\cC}$ and $\bar{d}$ can be expressed in
coordinates by their actions on the elements
$\phi\in C^\infty(\cEinf)=\Lambda^{0,0}(\cEinf)$, whence
\[
 \bar{d}\phi  = \sum\limits_i \TD{x^i} \phi\, dx^i, \quad
d_{\cC} \phi = \sum\limits_{\sigma,k}
   \frac{\partial \phi}{\partial u^k_{\sigma}}\omega^k_{\sigma},
\]
here $\omega^k_{\sigma} =
du^k_{\sigma} - \sum\limits_i u^k_{\sigma i}\,dx^i$ are the Cartan forms.

\subsection{Zero curvature representations and coverings}
Consider the tensor product over $\BBC$ of the exterior algebra
$\bar{\Lambda}(\cEinf)=\bigwedge^*\Lambda^{i,0}(\cEinf)$
with a finite\/-\/dimensional complex
Lie algebra~$\fg$. The product is endowed with
the bracket  $[A\mu, B\nu] = [A,B]\mu\wedge\nu$ for
$\mu,\nu\in\bar{\Lambda}(\cEinf)$ and $A,B\in \fg$.
The tensor product $\bar{\Lambda}(\cEinf)\otimes\fg(\BBC)$ is a
differential graded associative
algebra with respect to the multiplication $A\mu\cdot B\nu = (A\cdot
B)\mu\wedge\nu$ induced by the ordinary matrix multiplication so that
\[
\bar{d}(\rho\cdot\sigma) =
   \bar{d}\rho\cdot\sigma + (-1)^r\rho\cdot\bar{d}\sigma
\]
for $\rho\in\bar{\Lambda}^r(\cEinf)\otimes\fg(\BBC)$ and
$\sigma\in\bar{\Lambda}^s(\cEinf)\otimes\fg(\BBC)$,
whereas
\[
[\rho,\sigma] = \rho\cdot\sigma - (-1)^{rs}\rho\cdot\sigma.
\]
The elements of $C^\infty(\cEinf)\otimes\fg$ will be called
the \textsl{$\fg$-matrices}~\cite{MarvanDIPS22001}.

\begin{define}[\cite{MarvanDIPS22001}]
A horizontal $1$-\/form
$\alpha\in\bar{\Lambda}^1(\cEinf)\otimes\fg$ is
called the $\fg$-valued \textsl{zero\/-\/curvature representation} for
the equation~$\cE$ if the Maurer\/--\/Cartan condition holds:
\begin{equation}\label{zcrf}
\bar{d}\alpha = \frac12[\alpha,\alpha].
\end{equation}
If $\alpha = \sum_{i=1}^n A_i\,\mathrm{d}x^i$,
where $A_i\in\fg$, is a $\fg$-\/valued
zero\/-\/curvature representation for $\cE$, then we have
\begin{multline*}
  0 = \bar{d}\alpha - \frac12[\alpha,\alpha] 
= \sum_{i < j} (\TD{x^i}
 A_j - \TD{x^j} A_i) dx^i\wedge dx^j - \sum_{i < j} [A_i, A_j] dx^i\wedge
  dx^j = 0.
\end{multline*}
Therefore, equation~\eqref{zcrf} is equivalent to the following set of
conditions upon the $\fg$-\/matrices~$A_i$:
\begin{equation}\label{zcrm}
  \TD{x^j} A_i - \TD{x^i} A_j + [A_i, A_j] = 0, \quad \forall i,j=1,\dots,m: i\neq j.
\end{equation}
\end{define}


The most interesting zero\/-\/curvature representations for~$\cE$ are
those which contain a non\/-\/removable spectral parameter;
in this case the system~$\cE$ is integrable.
(The parameter is \textsl{removable} if one obtains
gauge\/-\/equivalent zero\/-\/curvature representations,
see section~\ref{secgauge} for details,
at different values of the parameter;
otherwise, the parameter is non\/-\/removable, or ``essential.'')

We recall that $n$ is the dimension of the base $\varSigma^n$ for the
vector bundle $\pi$. From now on, we consider mainly
$(1+1)$-\/dimensional 
systems, \textit{i.e.}, we let $n=2$
and interpret one independent variable as the time~$t$ and the other as
the spatial coordinate~$x$.
With the conventions 
$n=2$, $x^1=x$, $x^2=t$, $A_1=A$, and $A_2=B$,
the Maurer\/--\/Cartan equations (\ref{zcrf}--\ref{zcrm}) read
\begin{equation}
\label{zcrmprime}
\TD{t} A - \TD{x} B + [A, B] = 0. \tag{\ref{zcrm}$'$}
\end{equation}
This 
is the compatibility condition for the auxiliary linear system
\[
\Psi_x = A \Psi,  \quad \Psi_t = B \Psi,
\]
where 
$\Psi$ is a wave function and
matrices $A$ and $B$ belong to the tensor product
of a matrix Lie algebra $\fg$ and the ring $C^{\infty}(\cEinf)$ of
smooth functions on the prolongation $\cEinf$.
If 
the matrices $A$ and $B$ depend on the
spectral parameter and it is non\/-\/removable, then
the equation $\cE$ is integrable~\cite{Faddeev}.

The construction of Gardner's deformations~\cite{Miura68} is another
way to prove the integrability of evolution equations~$\cE$.

\begin{define}[Gardner's deformation~\cite{Miura68}]
Let $\mathcal{E}=\left\{u_t=f(x,u,u_x,\ldots,u_k)\right\}$
be a system of evolutionary partial differential equations
upon the unknowns $u(x,t)$ in two variables.
Let $\mathcal{E}_{\varepsilon}
= \{\tilde{u}_t = f_{\varepsilon}(x,\tilde{u},\tilde{u}_{x},\ldots,\tilde{u}_k;\varepsilon) \mid
{f_{\varepsilon}\in\mathrm{im}\,\TD{x}}\}$
be a deformation of $\mathcal{E}$ such that at each point
$\varepsilon\in\mathcal{I}\subseteq\mathbb{R}$ there exists
the Miura contraction
$\mathfrak{m}_{\varepsilon} = \{ u =
u(\tilde{u},\tilde{u}_x,\ldots; 
\varepsilon\}\colon\cE_{\varepsilon}\to\cE$. 
Then the pair $(\mathcal{E}_{\varepsilon}, \mathfrak{m}_{\varepsilon})$
is the \textsl{Gardner deformation} for the system~$\mathcal{E}$.
\end{define}

One obtains the recurrence relations between the conserved densities
$\tu_n(x,u,u_x,\ldots)$ for $\cE$ using the contraction $\mathfrak{m}_{\veps}$
and the expansion $\tu=\sum_{n=0}^{+\infty}\tu_n\veps^n$ of the deformed
unknowns $\tu$ in~$\veps$.

In the recent paper~\cite{TMPh2006} we 
understood Gardner's deformations in the extended sense,
namely, in terms of coverings over PDE and diagrams of coverings.
The zero\/-\/curvature representations and Gardner's deformations can
be considered as such geometric structures\footnote{B\"ack\-lund
(auto)\/transformations between PDE appear in 
the same context. In~\cite{TMPh2006} we argued that the former, when
regarded as the diagrams, are dual to the diagram description of
Gardner's deformations.}
that obey some extra conditions. 

\begin{define}[\cite{BVV}]
Let the assumptions on p.~\pageref{Notation}
hold and let the notation be as previously fixed.
A \textsl{covering} over the equation $\cE$ is another (usually,
larger) system of partial differential equations $\tilde{\cE}$
endowed with the $n$-\/dimensional Cartan distribution $\tilde{\cC}$
and such that there is a mapping $\tau\colon\tilde{\cE}\to\cEinf$
for which, at each point $\theta\in\tilde{\cE}$
the tangent map $\tau_{*,\theta}$
is an isomorphism of the plane $\tilde{\cC}_{\theta}$ to the Cartan
plane $\cC_{\tau(\theta)}$ at the point~$\tau(\theta)$ in $\cEinf$.
\end{define}

In practice, the construction of a covering over $\cE$ means the
introduction of new nonlocal variables such that their compatibility
conditions lie inside the initial system~$\cEinf$.
Whenever the covering is indeed realized as the fibre bundle
$\tau\colon\tilde{\cE}\to\cE$, the forgetful map~$\tau$ discards the
nonlocalities.

In these tems, the zero\/-\/curvature representations and Gardner's
deformations are coverings of special kinds
(see Examples~\ref{exCovAndZcr} and~\ref{exGardnerCov} below).
In this paper we use the geometric closedness of the two notions and
construct new Gardner's deformations from known zero\/-\/curvature
representations (but this is \textsl{not always} possible).

\begin{example}[A zero\/-\/curvature representation for
the KdV equation]\label{exLaxZCR}
Consider the Korteweg\/--\/de Vries (KdV) equation~\cite{Miura68}
\begin{equation}
\label{kdv}
  \cE_{\text{KdV}} = \left\{ u_t + u_{xxx} + 6uu_x = 0 \right\}
\end{equation}
and its Lax representation~\cite{Miura68,Jimbo,Faddeev}
\[
  \mathcal{L}_t = [\mathcal{L}, \mathcal{A}],
\]
where
\begin{equation}
\label{LaxL}
  \mathcal{L} = \TD[2]{x} + u, \quad
  \mathcal{A} = -4\TD[3]{x} -6u\TD{x} -3u_x.
\end{equation}

The linear auxiliary problem~\cite{ZSh} is
\begin{align*}
  \psi_{xx} + u\psi &= \lambda\psi, \\
  -4\psi_{xxx}  -6u\psi_x -3u_x\psi &= \psi_t,
\end{align*}
By definition, put \(\psi_0 = \psi\) and \(\psi_1 = \psi_x\). We obtain
\begin{align*}
  \psi_{0;x} & = \psi_1,\\
  \psi_{1;x} & = (\lambda-u)\psi_0,\\
  \psi_{0;t} & = -4\TD{x}((\lambda - u)\psi_{0}) -6u\psi_1 -3 u_{x}\psi_0  = u_x\psi_0 + (-4\lambda - 2u)\psi_!,\\
  \psi_{1;t} & = (-4\lambda^2 +2u\lambda + 2u^2 + u_x)\psi_0  + (-u_x)\psi_1.
\end{align*}
We finally rewrite this system as two matrix equations~\cite{ZSh},
\[
\underbrace{\begin{pmatrix}
\psi_{0;x} \\ \psi_{1;x}
\end{pmatrix} }_{\psi_x}
=
\underbrace{\begin{pmatrix}
  0 & 1 \\
  \lambda - u & 0
\end{pmatrix}}_{A}\underbrace{\begin{pmatrix}
\psi_{0} \\ \psi_{1}
\end{pmatrix}}_{\psi}
\quad
\underbrace{\begin{pmatrix}
\psi_{0;t} \\ \psi_{1;t}
\end{pmatrix} }_{\psi_t}
=
\underbrace{\begin{pmatrix}
  u_{x}  & -4\lambda - 2u\\
  -4\lambda^2 + 2u\lambda + 2u^2 + u_{xx} & -u_{x}
\end{pmatrix}}_{B}\underbrace{\begin{pmatrix}
\psi_{0} \\ \psi_{1}
\end{pmatrix}}_{\psi}
\]
This yields an $\gsl_2(\BBC)$-\/valued zero\/-\/curvature
representation $\alpha^{\text{KdV}} = Adx + Bdt$ for the KdV
equation~\eqref{kdv}. 
The representation~$\alpha^{\text{KdV}}$ was rediscovered
in~\cite{MarvanSL2}.
\end{example}

\begin{example}[Zero\/-\/curvature representations as
coverings]\label{exCovAndZcr}
Let $\fg\mathrel{{:}{=}}\mathfrak{sl}_2(\BBC)$ as above.
We introduce the
standard basis $e,h,f$ in  $\fg$  such that
\[
 [e,h] = -2e,\quad [e,f] = h,\quad [f,h] = 2f.
\]
We consider, simultaneously, the matrix representation
\[
\rho:\ \mathfrak{sl}_2(\BBC)\to \{M \in \Mat(2,2) | \tr M{=}0\}
\]
of $\fg$ and its representation in the space of vector fields with
polynomial coefficients on the complex line with the coordinate~$w$:
\[
\begin{array}{rclrclrcl}
\rho(e)& = &\begin{pmatrix} 0 & 1 \\ 0 & 0 \end{pmatrix}, &
\rho(h)& = &\begin{pmatrix} 1 & 0 \\ 0 & -1 \end{pmatrix},&
\rho(f)& = &\begin{pmatrix} 0 & 0 \\ 1 & 0 \end{pmatrix},\\
\varrho(e)& = &1\cdot\partial/\partial w, \quad &
\varrho(h)& =&-2w\cdot\partial/\partial w, \quad &
\varrho(f) &=&-w^2\cdot\partial/\partial w.
\end{array}
\]
Let us decompose the matrices $A_i$ (which occur in the zero\/-\/curvature
representation $\alpha = \sum_iA_idx^i$)
with respect to the basis in the space~$\rho(\fg)$,
\begin{align}\label{atensmat}
A_i & = a_e^{(i)} \otimes \rho(e) + a_h^{(i)} \otimes \rho(h) + a_f^{(i)} \otimes \rho(f),
\end{align}
for $a^{(i)}_j \in C^{\infty}(\cEinf)$.

To construct the covering $\tilde{\cE}$ over $\cEinf$ with a new
nonlocal variable $w$, we switch from the representation $\rho$ to
$\varrho$. We thus obtain the vector fields
\begin{equation}
\label{atensvf}\tag{\ref{atensmat}$'$}
V_{A_i} =  a_e^{(i)} \otimes \varrho(e) + a_h^{(i)} \otimes \varrho(h) + a_f^{(i)} \otimes \varrho(f)
\end{equation}
such that the prolongations of the total derivatives
$\TD[i]{x}$ to $\tilde{\cE}$ are defined by the formula
\begin{equation}
\label{tilded}
  \tilde{\TD{x^i}} = \TD{x^i} - V_{A_i}.
\end{equation}
The extended derivatives act on the nonlocal variable $w$ as follows
\[
\tilde{\TD{x^i}} w = dw \inner ( - V_{A_i}).
\]
\end{example}

\begin{rem}
The commutativity of the prolonged total derivatives
${\bigl[\tilde{\TD{x^i}}, \tilde{\TD{x^j}}\bigr] = 0}$ with $i\neq j$
is equivalent to the Maurer\/--\/Cartan equation~\eqref{zcrm}:
Indeed, we have that
\begin{multline*}
0 = [\tilde{\TD{x^i}}, \tilde{\TD{x^j}}] = [\TD{x^i} - V_{A_i}, \TD{x^j} -
V_{A_j}] = [\TD{x^i}, \TD{x^j}]  - [\TD{x^i}, V_{A_j}] - [V_{A_j},
\TD{x^j}] + [V_{A_i}, V_{A_j}]= \\
{}= - V_{\TD{x^i} A_i} + V_{\TD{x^j} A_i} + V_{[A_i, A_j]} = V_{ \TD{x^j} A_i - \TD{x^i}
  A_j + [A_i, A_j]}\   \Leftrightarrow{} \  \TD{x^j} A_i - \TD{x^i}
  A_j + [A_i, A_j]=0.
\end{multline*}
This motivates the choice of the minus sign in~\eqref{tilded}.
\end{rem}

\begin{example}[A one\/-\/dimensional covering over the KdV equation]
One obtains the covering over the KdV equation from the
zero\/-\/curvature representation $\alpha$ (see Example~\ref{exLaxZCR})
by using representation~\eqref{atensvf} in the space of vector fields.
Applying~\eqref{atensvf} to the matrices $A$,\ $B\in\gsl_2(\BBC)$,
we construct the following vector fields with the nonlocal variable $w$:
\begin{align*}
V_A & = (1  - (\lambda -u)w^2)\cdot\partial/\partial w,\\
V_B & = \left[(-4\lambda - 2u)  - 2uw  -  (-4\lambda^2 + 2u\lambda + 2u^2 + u_{xx})w^2\right]\cdot\partial/\partial w.
\end{align*}
The prolongations of the total derivatives act on $w$ by the rules
\begin{subequations}\label{mcov}
\begin{align}
w_x & =  - 1  + (\lambda -u)w^2,\\
w_t & = - \left((-4\lambda - 2u) - 2u_xw  - (-4\lambda^2 + 2u\lambda +
  2u^2 + u_{xx})w^2\right).
\end{align}
\end{subequations}
We thus obtain the one\/-\/dimensional covering over the KdV
equation~\eqref{kdv}.
It depends on the non\/-\/removable~\cite{BVV} spectral
parameter~$\lambda$. In what follows we show that this covering is
equivalent to the covering~\eqref{miuracov} which is derived from
Gardner's deformation~\eqref{kdvdef} of the KdV equation~\eqref{kdv}.
\end{example}

\subsection{The projective substitution and nonlinear representations
of Lie algebras in the spaces of vector fields}
Suppose $\fg$ is a finite\/-\/dimensional Lie algebra. We shall use
the projective substitution~\cite{RoelofsThesis} to construct
a covering over the equation $\cE$ starting from a $\fg$-\/valued
zero\/-\/curvature representation for~$\cE$.

Let $M$ be an $m$-\/dimensional manifold with local coordinates
\[
v = (v^1, v^1, \dots, v^m)\in M, \ \text{and put}\
\partial_v = (\partial_{v^1}, \partial_{v^2}, \dots, \partial_{v^m})^t.
\]
For any $g\in\gl\subseteq\gl_n(\BBC)$, its representation
$V_g$ in the space of vector fields on~$M$ is given by the formula
\[
 V_g = v^tg\partial_v.
\]
We note that $V_g$ is linear in $v^i$. By construction, the
representation preserves the commutation relations in the initial Lie
algebra $\fg$:
\[
 [V_g, V_f] = [v^tg\partial_v, v^tf\partial_v] = v^t[g,f]\partial_v =
 V_{[g,f]}, \quad f, g\in\fg.
\]
At all points of $M$ where $v_1\neq0$ we consider the projection
\begin{equation}\label{projsub}
\pi : v^i \to w^i = \mu v^i/ v^1, \quad \mu\in\BBR
\end{equation}
and its differential \(d\pi: \partial_v \to \partial_{w}\).
The transformation $\pi$ yields the new coordinates on the open subset
of $M$ where $v^1\neq 0$ and corresponding subset of $TM$:
\[ w = (\mu, w^2, \dots, w^m), \quad \partial_{w} = (-\frac{1}{\mu}\sum\limits_{i=2}^m w^i\partial_{w^i}, \partial_{w^2}, \dots, \partial_{w^m}).
\]

Consider the vector field $W_g=d\pi(V_g)$. In coordinates, we have
\begin{equation}
\label{wroelofs}
W_g = w g \partial_{w}^{t}.
\end{equation}
We note that, generally, $W_g$ is nonlinear with respect to~$w^i$.
The commutation relations between the vector fields of such type
are also inherited from the relations in the Lie algebra $\fg$:
\[
  [W_g, W_f] = [d\pi(V_g), d\pi(V_f)] = d\pi([g,f]) = d\pi(V_{[g,f]}) = W_{[g,f]}.
\]

Using representation~\eqref{wroelofs} for the matrices $A$ and $B$
that determine the zero\/-\/curvature representation
$\alpha^{\text{KdV}}=Adx+Bdt$ for the KdV equation, we obtain their
realizations 
in terms of the vector fields:
\begin{align*}
W_A & = \frac{1}{\mu}( -\lambda w^2 + \mu^2 + u w^2)\,\partial/\partial w,\\
W_B & = \frac{1}{\mu}( -u_{xx} w^2 - 2u_x\mu w + 4\lambda^2w^2 - 4\lambda\mu^2 - 2\lambda u w^2 - 2\mu^2u - 2u^2 w^2)\,\partial/\partial w.
\end{align*}
Therefore, the prolongations of the total derivatives act on the
nonlocality $w$ as follows:
\begin{subequations}\label{rcov}
\begin{align}
w_x & = -\frac{1}{\mu}( -\lambda w^2 + \mu^2 + u w^2),\\
w_t & = -\frac{1}{\mu}( -u_{xx} w^2 - 2u_x\mu w + 4\lambda^2w^2 - 4\lambda\mu^2 - 2\lambda u w^2 - 2\mu^2u - 2u^2w^2).
\end{align}
\end{subequations}
The parameter $\mu$ is removable by the transformation $w\to\mu w$,
which rescales it to unit.
Applying this transformation to~\eqref{rcov}, we reproduce the
covering~\eqref{mcov}.

\begin{example}[A covering which is based on Gardner's
deformation]\label{exGardnerCov}
Consider the Gardner deformation~\cite{Miura68} of the KdV
equation~\eqref{kdv},
\begin{subequations}\label{kdvdef}
\begin{align}
\cE_{\veps} = & \left\{\tu_t  = - (\tu_{xx} + 3\tu^2 -
  2\veps^2\tu^3)_x \right\}, \label{kdvext}
\\
\mathfrak{m}_{\veps} = & \left\{ u  = \tu - \veps \tu_x - \veps^2\tu^2
  \right\}\colon\cE_\veps\to\cE_0.\label{miura}
\end{align}
\end{subequations}
Expressing $\tu_x$ from~\eqref{miura} and substituting it
in~\eqref{kdvext}, we obtain the one\/-\/dimensional covering over the
KdV equation,
\begin{subequations}
\label{miuracov}
\begin{align}
\tu_x& = \frac{1}{\veps}( \tu - u) - \veps\tu^2\label{miuracovx},\\
\tu_t& = \frac{1}{\veps}( u_{xx} + 2u^2) + \frac{1}{\veps^2} u_x +
\frac{1}{\veps^3}u +  \left(-2u_x - \frac{2}{\veps}u - \frac{1}{\veps^3}\right)\tu
+ \left(2\veps u + \frac{1}{\veps}\right)\tu^2, \label{miuracovext}
\end{align}
\end{subequations}

We claim that covering~\eqref{miuracov} is equivalent to the covering
that was obtained in~\cite[p.~277]{BVV} for the KdV equation.
To prove this, we first put $ \tu = - \tv / \veps$. We have
\begin{align*}
-\frac{\tv}{\veps} &= -\frac{1}{\veps^2}\tv - \frac{1}{\veps}u  - \frac{1}{\veps}\tv^2,\\
\intertext{in other words}
\tv_x & = u + \left(\tv + \frac{1}{2\veps}\right)^2 - \frac{1}{4\veps^2}.
\end{align*}
Next, we put $ p = \tv + 1/(2\veps)$, whence we obtain
\begin{subequations}
\label{covBVV}
\begin{align}
p_x &= u + p^2  - \frac{1}{4\veps^2},\\
p_t &= -u_{xx} - 2u^2 - \frac{1}{2\veps^2}u + \frac{1}{4\veps^4} -
2u_xp - (\veps^2u + \frac{1}{2})p^2.
\end{align}
\end{subequations}
Dividing~\eqref{mcov} by $w^2$, we conclude that
\begin{align*}
w_x &= - 1 + (\lambda-u)w^2,\\
\frac{w_x}{w^2} &= - \frac{1}{w^2} - u + \lambda.
\end{align*}
On the other hand, we put $p = 1/w$, whence $p_x = - w_x/w^2$, and set
$\lambda = 1/(4\veps^2)$. This brings~\eqref{mcov} to the same
notation as in formulas~\eqref{covBVV},
\begin{align*}
p_x &= u + p^2  - \frac{1}{4\veps^2},\\
p_x & = u + p^2 - \lambda.
\end{align*}
The corresponding one\/-\/form of the zero\/-\/curvature representation for
the KdV equation is equal to
\begin{equation}
\label{zcrBVV}
\alpha^{\text{KdV}}_{2} = \begin{pmatrix}
0 & \lambda - u\\
1 & 0
\end{pmatrix}dx
+
 \begin{pmatrix}
- u_{x} & - 4\lambda^2 + 2\lambda u  + 2u^2 +
  u_{xx}  \\
 - 4\lambda - 2u  &  u_{x}
\end{pmatrix} dt.
\end{equation}
In the next section we show that this zero curvature representation is
also equivalent to $\alpha^{\text{KdV}}$ from Example~\ref{exLaxZCR}.
\end{example}

\subsection{Gauge transformations}\label{secgauge}
Let $G$ be the Lie group of the Lie algebra $\fg$ (so that
$G=SL_2(\BBC)$ in the previous example).
Given an equation $\cE$, for any\/-\/zero curvature representation
$\alpha$ there exists the 
zero\/-\/curvature representation $\alpha^{S}$ such that
\begin{equation}
  \alpha^{S} = \bar{d}S\cdot S^{-1} + S\cdot\alpha\cdot S^{-1},\quad
S\in C^{\infty}(\cE^{\infty})\otimes G \label{gaugetrans}.
\end{equation}
The zero\/-\/curvature representation $\alpha^{S}$ is called
\textsl{gauge\/-\/equivalent} to $\alpha$
and $S$ is the \textsl{gauge transformation}.
Suppose $\alpha = A_i\,dx^i$. The gauge transformation $S$ acts on the
components $A_i$ of $\alpha$ as follows
\begin{equation}
A_i^{S} = \TD{x^i}(S)S^{-1} + SA_iS^{-1}.
   \tag{\ref{gaugetrans}$'$}\label{gaugetransmat}
\end{equation}


\begin{example}[The relation between the coverings which stem from gauge
  equivalent zero curvature representations]
Let $\fg=\mathfrak{sl}_2(\BBC)$ and $G=SL_2(\BBC)$.
Suppose $S\in SL_2(\BBC)$, so that
\[
S = \begin{pmatrix}
  s_1 & s_2 \\
  s_3 & s_4
\end{pmatrix}, \quad \det S = 1.
\]
Let $\alpha = \sum_iA_i dx^i$ be a zero-curvature representation for
an equation $\cE$.
Using decomposition~\eqref{atensmat} for $A_i\in\gsl_2(\BBC)$, we
inspect how the gauge transformation $S$ acts on the components of
$\alpha$:
\begin{multline*}
A_i^S  = \TD{x^i}(S)S^{-1}  + S(a_e^{(i)}\otimes\rho(e) + a_h^{(i)}\otimes\rho(h) a_e^{(i)}\otimes\rho(f) )S^{-1} = \\
{}= \TD{x^i}(S)S^{-1}  + a_e^{(i)}\otimes(S\cdot\rho(e)\cdot S^{-1}) +
a_h^{(i)}\otimes(S\cdot\rho(h)\cdot S^{-1})
+a_e^{(i)}\otimes(S\cdot\rho(f)\cdot S^{-1}),
\end{multline*}
\begin{align*}
  \TD{x^i}(S)S^{-1}  &=
  \begin{pmatrix}
    s_{1;i}s_4 - s_{2;i}s_3 & s_{2;i}s_1 - s_{1;i}s_2\\
    s_{3;i}s_4 - s_{4;i}s_3 & s_{4;i}s_1 - s_{3;i}s_2\\
  \end{pmatrix} =
  \begin{pmatrix}
    s_{1;i}s_4 - s_{2;i}s_3 & s_{2;i}s_1 - s_{1;i}s_2\\
    s_{3;i}s_4 - s_{4;i}s_3 & -s_{1;i}s_4 + s_{2;i}s_3\\
  \end{pmatrix} = \\
{}&{}= (s_{2;i}s_1 - s_{1;i}s_2)\rho(e) + (s_{1;i}s_4 - s_{2;i}s_3)\rho(h) + (s_{3;i}s_4 - s_{4;i}s_3)\rho(f)\\
  S\cdot\rho(e)\cdot S^{-1} & =
  \begin{pmatrix}
    -s_1s_3 & s_1^2 \\
    -s_3^2 & s_1s_3
  \end{pmatrix}  = (s_1^2)\rho(e) + (-s_1s_3)\rho(h) + (-s_3^2)\rho(f),\\
  S\cdot\rho(h)\cdot S^{-1} & =
  \begin{pmatrix}
    s_1s_4+s_2s_3 &  -2s_1s_2 \\
    2s_3s_4 & -s_1s_4 - s_2s_3
  \end{pmatrix} =  (-2s_1s_2)\rho(e) + (s_1s_4+s_2s_3)\rho(h) + (2s_3s_4)\rho(f),\\
  S\cdot\rho(f)\cdot S^{-1} & =
  \begin{pmatrix}
    s_2s_4 & -s_2^2 \\
    s_4^2 & - s_2s_4
  \end{pmatrix} = (-s_2^2)\rho(e) + (s_2s_4)\rho(h) + (s_4^2)\rho(f),
\end{align*}
We finally obtain
\begin{multline*}
  A_i^S = (s_{2;i}s_1 - s_{1;i}s_2 + s_1^2a_e^{(i)}   -2s_1s_2 a_h^{(i)} -s_2^2 a_f^{(i)})\otimes\rho(e)  + {}\\
  {} + (s_{1;i}s_4 - s_{2;i}s_3 - s_1s_3a_e^{(i)} +(s_1s_4+s_2s_3)a_h^{(i)} + s_2s_4a_f^{(i)} )\otimes\rho(h) + {} \\
  {} + (s_{3;i}s_4 - s_{4;i}s_3 -s_3^2a_e^{(i)}  + 2s_3s_4a_h^{(i)} + s_4^2a_f^{(i)})\otimes\rho(f).
\end{multline*}
Passing to the vector field representation of $A^{S}_i$ by using
formula~\eqref{atensvf}, we have
\begin{multline}
\label{asvf}
 V_{A_i^S} = (s_{2;i}s_1 - s_{1;i}s_2 + s_1^2a_e^{(i)}   -2s_1s_2 a_h^{(i)} -s_2^2 a_f^{(i)})\otimes\varrho(e)  + {}\\
  {} + (s_{1;i}s_4 - s_{2;i}s_3 - s_1s_3a_e^{(i)} +(s_1s_4+s_2s_3)a_h^{(i)} + s_2s_4a_f^{(i)} )\otimes\varrho(h) + {} \\
  {} + (s_{3;i}s_4 - s_{4;i}s_3 -s_3^2a_e^{(i)}  + 2s_3s_4a_h^{(i)} + s_4^2a_f^{(i)})\otimes\varrho(f).
\end{multline}
In other words, whenever we start from the covering of $\cE$
associated with a zero\/-\/curvature representation $\alpha$, such that
the differentiation rules for the nonlocality $w$ are
\[
\TD{x^i}(w) = - a_{e}^{(i)} + 2a_{h}^{(i)}w +a_f^{(i)}w^2,
\]
we obtain the 
covering which is associated with $\alpha^{S}$:
\begin{multline}
\label{aicov}
\TD{x^i}(w_S) = - (s_{2;i}s_1 - s_{1;i}s_2 + s_1^2a_e^{(i)}   -2s_1s_2 a_h^{(i)} -s_2^2 a_f^{(i)})  + {}\\
  {} + 2(s_{1;i}s_4 - s_{2;i}s_3 - s_1s_3a_e^{(i)} +(s_1s_4+s_2s_3)a_h^{(i)} + s_2s_4a_f^{(i)} )w_S + {} \\
  {} + (s_{3;i}s_4 - s_{4;i}s_3 -s_3^2a_e^{(i)}  + 2s_3s_4a_h^{(i)} + s_4^2a_f^{(i)})w^2_S.
\end{multline}

We shall use this relations between the two coverings
in the search of the gauge transformations between known
zero\/-\/curvature representations for the KdV equation.
\end{example}

\begin{example}[Gauge transformations between zero\/-\/curvature
  representations for the KdV equation]\label{exBVV2Gardner}
Let us find the gauge transformations that bring
coverings~\eqref{mcov} and~\eqref{miuracov} to the form~\eqref{covBVV}.

For the transformation~\eqref{mcov}$\to$\eqref{covBVV} we have
\begin{align*}
p_x  & = u + p^2 - \lambda = -(s_{2;x}s_1 - s_{1;x}s_2 + s_1^2   -s_2^2 (\lambda-u)  + {}\\
{} & {} - 2(s_{1;x}s_4 - s_{2;x}s_3 - s_1s_3  + s_2s_4(\lambda-u) )p - {} \\
{} & {} - (s_{3;x}s_4 - s_{4;x}s_3 -s_3^2  + s_4^2(\lambda-u))p^2).
\end{align*}
Solving this equation for $s_i$, we find a unique solution $s_2 = s_3
= i$, $s_1=s_4=0$:
\begin{equation}
\label{gaugeM2BVV}
S=\begin{pmatrix}
0 & i \\
i & 0
\end{pmatrix},
\quad S^{-1}=\begin{pmatrix}
0 & -i \\
-i & 0
\end{pmatrix}.
\end{equation}
The matrices of the zero curvature representations~\eqref{mcov}
and~\eqref{covBVV} are related as follows:
\[
\begin{pmatrix}
0 & i \\
i & 0
\end{pmatrix}
\begin{pmatrix}
0 & 1 \\
\lambda - u & 0
\end{pmatrix}
\begin{pmatrix}
0 & -i \\
-i & 0
\end{pmatrix}
=
\begin{pmatrix}
0 & \lambda - u\\
1 & 0
\end{pmatrix}.
\]

On the other hand, for the
transformation~\eqref{miuracov}$\to$\eqref{covBVV} we have
\begin{align*}
  p_x  & = u + p^2 - \frac{1}{4\veps^2} = - (s_{2;x}s_1 - s_{1;x}s_2 - s_1^2\frac{u}{\veps}   + s_1s_2\frac{1}{\veps} -s_2^2 \veps  + {}\\
  {}&  {} - 2(s_{1;x}s_4 - s_{2;x}s_3 + s_1s_3\frac{u}{\veps} -(s_1s_4+s_2s_3)\frac{1}{2\veps} + s_2s_4\veps )p - {} \\
  {}& {} - (s_{3;x}s_4 - s_{4;x}s_3 + s_3^2\frac{u}{\veps} -
  s_3s_4\frac{1}{\veps} + s_4^2\veps)p^2).
\end{align*}
Solving this equation for $s_i$, we find a solution
\(s_1 = i/\sqrt{\veps}\), \(s_2 = i/(2\veps\sqrt{\veps})\),
 \(s_3 =0\),  \(s_4 = i\sqrt{\veps}\). Therefore,
\begin{equation}
\label{gaugeMiura2BVV}
S=\begin{pmatrix}
  i/\sqrt{\veps} & i/(2\veps\sqrt{\veps})\\
  0 & -i\sqrt{\veps}
\end{pmatrix},
\quad
S^{-1} = \begin{pmatrix}
  -i\sqrt{\veps} & -i/(2\veps\sqrt{\veps})\\
  0 & i/\sqrt{\veps}
\end{pmatrix},
\end{equation}
The matrices of the zero\/-\/curvature representations~\eqref{miuracov}
and~\eqref{covBVV} satisfy the relation
\[
\begin{pmatrix}
  i/\sqrt{\veps} & i/(2\veps\sqrt{\veps})\\
  0 & -i\sqrt{\veps}
\end{pmatrix}
\begin{pmatrix}
    0 & \frac{1}{4\veps^2} - u \\
    1 & 0
\end{pmatrix}
\begin{pmatrix}
  -i\sqrt{\veps} & -i/(2\veps\sqrt{\veps})\\
  0 & i/\sqrt{\veps}
\end{pmatrix}
=
\begin{pmatrix}
\frac{1}{2\veps} & \frac{u}{\veps} \\
-\veps & -\frac{1}{2\veps}
\end{pmatrix}.
\]
\end{example}

Let us recall that in Example~\ref{exLaxZCR} we derived the
zero\/-\/curvature representation for the KdV equation from its Lax
pair. Having done that, we also revised the transition from this
zero\/-\/curvature representation to the Gardner deformation of the KdV
equation. In the next section we extend this approach and find the
generalizations of Gardner's deformation~\eqref{kdvdef} for
graded systems,
in particular, for the $N{=}1$ and $N{=}2$ supersymmetric 
Korteweg\/--\/de Vries equations.

\section{Graded systems}\label{SecGraded}
\subsection{Lie super\/-\/algebras}
We recall first the definition of the Lie
super\/-\/algebra~\cite{Lejtes84,Manin84,Berezin}.
Let $\mathcal{A}$ be an algebra over the field \(\BBC\) and
$\BBZ_2=\BBZ/2\BBZ=\{\bar{0},\bar{1}\}$ be the group of residues
modulo~$2$.
An algebra $\mathcal{A}$ is called a super\/-\/algebra if $\mathcal{A}$
can be decomposed as the direct sum \(\mathcal{A} =
\mathcal{A}_{\bar{0}}\oplus\mathcal{A}_{\bar{1}}\) such that
\[
\mathcal{A}_{\bar{0}}\cdot\mathcal{A}_{\bar{0}}\subset\mathcal{A}_{\bar{0}},
\quad
\mathcal{A}_{\bar{0}}\cdot\mathcal{A}_{\bar{1}}\subset\mathcal{A}_{\bar{1}},
\quad
\mathcal{A}_{\bar{1}}\cdot\mathcal{A}_{\bar{1}}\subset\mathcal{A}_{\bar{0}}.
\]
A nonzero element of $\mathcal{A}_{\bar{0}}$ or
$\mathcal{A}_{\bar{1}}$ is called \textsl{homogeneous} (respectively,
even or odd). Let $p(a)=k$ if $a\in\mathcal{A}_{k}$ for $k\in\BBZ_2$.
The number $p(a)$ is the parity of~$a$.

The super\/-\/algebra $\fg$ is a \textsl{Lie super\/-\/algebra}
if it is endowed with the multiplication $\bbl \cdot\,,\cdot \bbr$
that satisfies the equalities
\begin{gather}
\bbl x, y \bbr = - (-1)^{p(x)p(y)} \bbl y,x \bbr, \label{scomm1}\\
\bbl x, \bbl y, z \bbr\bbr = \bbl\bbl x, y \bbr, z \bbr +
(-1)^{p(x)p(y)} \bbl y, \bbl x,z \bbr \bbr.
\end{gather}
here $x$,\ $y$, and $z$ are arbitrary elements of $\mathcal{A}$ and
$x$,\ $y$ are presumed homogeneous.

The super\/-\/matrix structure of a matrix is achieved whenever the
parity is assigned to its rows and columns.
We choose the super\/-\/matrix structure such that rows
(respectively, columns) which are assigned the even parity always
precede the rows (columns) of odd parity~\cite{Lejtes84}.
If a matrix has $r$ even and $s$ odd rows and $p$ even and $q$
odd columns, then its dimension is said to be
equal to $(r\mid s)\times(p \mid q)$.
In particular, we shall use the shorthand notation $(p\mid q)$ for
the dimension $(p\mid q)\times(p\mid q)$.
We denote by $\Mat(p\mid q;\mathcal{A})$ the set of all matrices of
dimension $(p\mid q)$ with elements that belong the
super\/-\/algebra~$\mathcal{A}$.

Let us introduce the super\/-\/matrix structure on the space
$\Mat(p\mid q;\mathcal{A})$. Consider a matrix
$X=\left(\begin{smallmatrix}
  R & S \\
  T & U
\end{smallmatrix}\right)\in\Mat(p\mid q;\mathcal{A})$ and set
\begin{align*}
p(X) = \bar{0} & \quad \text{if }
  p(R_{ij})=p(U_{ij})=\bar{0},\quad p(T_{ij})=p(S_{ij})=\bar{1};\\
p(X) = \bar{1} & \quad \text{if }
  p(R_{ij})=p(U_{ij})=\bar{1},\quad p(T_{ij})=p(S_{ij})=\bar{0}.
\end{align*}
Taking into account the graded skew\/-\/symmetry~\eqref{scomm1} of the
bracket $\bbl\cdot,\cdot\bbr$, we define the Lie
super\/-\/algebra structure on the space $\Mat(p\mid q; \mathcal{A})$ by
the formula
\begin{equation}\label{scomm}
\bbl X, Y \bbr = XY - (-1)^{p(X)p(Y)}YX,\quad X,Y\in\Mat(p \mid q;\mathcal{A}).
\end{equation}
The Lie super\/-\/algebras $\gl(m \mid n)=\Mat(m \mid n,\BBC)$ and
$\gsl(m \mid n)=\{X\in\gl(m \mid n)|\str X=0\}$, where
$\str\left(\begin{smallmatrix}
  R & S \\
  T & U
\end{smallmatrix}\right) = \tr R - \tr U$,
are called the general linear and special linear Lie super\/-\/algebras,
respectively.

To calculate the super\/-\/commutator $\bbl X,Y \bbr$ of two
nonhomogeneous elements $X$ and $Y$, we first split $X = X_{\bzero} +
X_{\bone}$ and $Y = Y_{\bzero} + Y_{\bone}$ so that
$p(X_{\bzero})=p(Y_{\bzero})=\bar{0}$ and
$p(X_{\bone})=p(Y_{\bone})=\bar{1}$. Using~\eqref{scomm}, we obtain
\begin{multline}\label{scommexpad}
\bbl X, Y\bbr  = \bbl X_{\bzero} + X_{\bone}, Y_{\bzero} + Y_{\bone}\bbr  =
\bbl X_{\bzero}, X_{\bzero}\bbr  + \bbl X_{\bzero}, Y_{\bone}\bbr  +
\bbl X_{\bone}, Y_{\bzero}\bbr  + \bbl X_{\bone}, Y_{\bone}\bbr  = {}\\
{} = (X_{\bzero}Y_{\bzero} - Y_{\bzero}X_{\bzero}) + (X_{\bzero}Y_{\bone} - Y_{\bone}X_{\bzero}) + (X_{\bone}Y_{\bzero} - Y_{\bzero}X_{\bone}) +
(X_{\bone}Y_{\bone} + Y_{\bone}X_{\bone}).
\end{multline}

The super\/-\/determinant, or the \textsl{Berezinian} of an invertible
matrix
$X=\left(\begin{smallmatrix}
  R & S \\
  T & U
\end{smallmatrix}\right)\in \gl(m\mid n)$
is given by the formula~\cite{Berezin}
\[
\sdet \begin{pmatrix}
  R & S \\
  T & U
\end{pmatrix} = \frac{\det(R - SU^{-1}T)}{\det{U}}.
\]

\begin{example}
In what follows, we shall use the Lie super-algebra
$\gsl(1\mid2)\simeq\gsl(2\mid1)$, see~\cite{Frappat:1996pb}.
Its representation in the space $\Mat(2\mid1;\BBC)$ is given by
the eight basic vectors,
four even: $E^{+}$, $E^{-}$, $H$, and $Z$, and four odd:
$F^{+}$, $F^{-}$, $\bF^{+}$, and $\bF^{-}$, where
\[
\begin{array}{llll}
  E^{+} =
  \begin{pmatrix}
    0 & 1 & 0 \\
    0 & 0 & 0 \\
    0 & 0 & 0
  \end{pmatrix} &
  E^{-} =
  \begin{pmatrix}
    0 & 0 & 0 \\
    1 & 0 & 0 \\
    0 & 0 & 0
  \end{pmatrix} &
  H =
  \begin{pmatrix}
    1/2 & 0 & 0 \\
    0 & -1/2 & 0 \\
    0 & 0 & 0
  \end{pmatrix} &
  Z =
  \begin{pmatrix}
    1/2 & 0 & 0 \\
    0 & 1/2 & 0 \\
    0 & 0 & 1
  \end{pmatrix} \\
  F^{+} =
  \begin{pmatrix}
    0 & 0 & 0 \\
    0 & 0 & 0 \\
    0 & 1 & 0
  \end{pmatrix} &
  F^{-} =
  \begin{pmatrix}
    0 & 0 & 0 \\
    0 & 0 & 0 \\
    1 & 0 & 0
  \end{pmatrix} &
  \bar{F}^{+} =
  \begin{pmatrix}
    0 & 0 & 1 \\
    0 & 0 & 0 \\
    0 & 0 & 0
  \end{pmatrix} &
  \bar{F}^{-} =
  \begin{pmatrix}
    0 & 0 & 0 \\
    0 & 0 & 1 \\
    0 & 0 & 0
  \end{pmatrix}.
\end{array}
\]
The elements of the basis satisfy the following commutation
relations: 
\begin{align*}
\bbl H,E^\pm \bbr &= \pm E^\pm &
  \bbl H,F^\pm \bbr &= \pm \half F^\pm &
     \bbl H,\bF^\pm \bbr &= \pm \half \bF^\pm \\
\bbl Z,H \bbr &= \bbl Z,E^\pm \bbr = 0 &
  \bbl Z,F^\pm \bbr  &= \half F^\pm &
     \bbl Z,\bF^\pm \bbr &= -\half \bF^\pm \\
\bbl E^\pm,F^\pm \bbr &= \bbl E^\pm,\bF^\pm \bbr = 0 &
   \bbl E^\pm,F^\mp \bbr &= -F^\pm &
     \bbl E^\pm,\bF^\mp \bbr  &= \bF^\pm \\
\bbl F^\pm,F^\pm \bbr &= \bbl \bF^\pm,\bF^\pm \bbr = 0 &
  \bbl F^\pm,F^\mp \bbr &= \bbl \bF^\pm,\bF^\mp \bbr = 0 &
    \bbl F^\pm,\bF^\pm \bbr &= E^\pm \\
\bbl E^+,E^- \bbr &= 2H &
   \bbl F^\pm,\bF^\mp \bbr &= Z \mp H. &
\end{align*}
The Lie super\/-\/algebra \(\gsl(2\mid1)\) contains the Lie algebra
\(\gsl(2,\BBC)\) as a subalgebra. The vectors \(E^{\pm}\) and \(H\)
form the basis in~\(\gsl(2,\BBC)\).

The Lie super\/-\/group $SL(2\mid 1)$, which corresponds to the Lie
super\/-\/algebra $\gsl(2\mid 1)$, consist of the matrices with unit
Berezinian: $SL(2\mid1) = \{ S\in GL(2\mid 1) \mid \sdet S= 1\}$.
\end{example}

\begin{rem}
\label{remSL21dec}
Consider the following three subgroups 
of the Lie super-group $SL(2\mid 1)$:
\[
G_{+}=\left\{\begin{pmatrix} \mathbf{1} & B \\  0 & 1 \end{pmatrix}\right\},\quad
G_{0}=\left\{\begin{pmatrix} A & 0 \\  0 & D \end{pmatrix}\right\},\quad
G_{-}=\left\{\begin{pmatrix} \mathbf{1} & 0 \\  C & 1 \end{pmatrix}\right\}.
\]
Each matrix \(S\in SL(2\mid1)\) can be represented~\cite{Lejtes80} as a
product $S=S_{+}S_{0}S_{-}$, where $S_{+}\in G_{+}$, $S_{0}\in
G_{0}$, $S_{-}\in G_{-}$. Due to the multiplicativity of the
Berezinian, $\sdet S = \sdet S_{+}\cdot\sdet S_{0}\cdot\sdet S_{-} =
1$, and
in view of the obvious property $\sdet S_{+}=\sdet S_{-} = 1$ for all
elements of the groups $G_{+}$ and $G_{-}$,
we conclude that $\sdet S_0 = 1$ for all $S_0 \in G_0$.

For the Lie super\/-\/group $SL(2\mid1)$, the dimension of
the matrix $D$ is equal to $1\times 1$ and the dimension of the matrix
$A$ is equal to $2\times 2$. Let us show that $G_{0} \simeq
SL(2\mid0)$.
The condition $\sdet S_{0} = 1$ for the matrix  $S_0\in SL(2\mid1)$
implies the equality $\det A = \det D$ of the usual
determinants of $A$ and $D$. Therefore, to each matrix $A\in
GL(2\mid0)$ we can put into correspondence the matrix
$S_A \in G_{0}$ by setting $S_A
= \left(\begin{smallmatrix} A & \boldsymbol{0} \\ \boldsymbol{0} & \det
A \end{smallmatrix}\right)$ and conversely, to each matrix $S =
\left(\begin{smallmatrix} A & \boldsymbol{0} \\ \boldsymbol{0} & D
\end{smallmatrix}\right)\in SL(2\mid1)$ we can associate the matrix $A$
from~$GL(2\mid0)$.
\end{rem}

\subsection{Zero\/-\/curvature representations of graded extension of
the KdV equation}
The graded extension of the Maurer\/--\/Cartan
equation~\ref{zcrm} has the form
\begin{equation}\label{zcrmsuper}
\TD{x^j} A_i - \TD{x^i} A_j + \bbl A_i, A_j\bbr  = 0, \quad
   \forall i,j=1,\dots,m: i\neq j.
\end{equation}
Let us study in more detail the geometry of the $N{=}1$ and $N{=}2$
supersymmetry\/-\/invariant generalizations of the
Korteweg\/--\/de Vries equation~\cite{MathieuNew,MathieuN1}.

\subsubsection{$N=1$ supersymmetric Korteweg\/--\/de Vries equation}
The $N=1$ supersymmetric generalization of the KdV
equation~\eqref{kdv} is the sKdV equation~\cite{MathieuN1}
\begin{equation}\label{sKdV}
\phi_t = - \phi_{xxx} - 3(\phi \sD \phi)_x, \quad \sD =
\frac{\partial}{\partial \theta} + \theta\frac{d}{dx},
\end{equation}
where $\phi(x,t,\theta) = \xi + \theta u$ is a complex fermionic
super-field,  \(\theta\) is the Grassmann (or anti\/-\/commuting)
variable such that \(\theta^2=0\), the unknown \(u\) is the bosonic
field, and \(\xi\) is the fermionic field.
By using the expansion $\phi(x,t,\theta) = \xi + \theta u$
in~\eqref{sKdV}, we obtain
\begin{subequations}\label{sKdVComponents}
\begin{align}
\underline{u_t} = & \underline{- u_{xxx} - 6uu_x} + 3\xi\xi_{xx}, \label{n1getkdv}\\
\xi_t = & - \xi_{xxx} - 3(u\xi)_x.
\end{align}
\end{subequations}
The KdV equation~\eqref{kdv} is underlined in~\eqref{n1getkdv}.

\begin{example}[Zero\/-\/curvature representation and Gardner's
  deformation of the sKdV equation]\label{n1example}
The sKdV equation~\eqref{sKdVComponents} admits the $\gsl(2\mid
1)$-valued zero\/-\/curvature representation
\[
\alpha^{N=1} = A^{N=1}_1 dx + B^{N=1}_1 dt,
\]
where
\[
A^{N=1}_1 = \begin{pmatrix}
-\frac{1}{2\veps} & - u +\frac{1}{4\veps^2} & \xi \\
1 & -\frac{1}{2\veps} & 0 \\
0 & -\xi & -\frac{1}{\veps}
\end{pmatrix},
\]
\[
B^{N=1}_1 = \begin{pmatrix}
 \frac12\veps^{-3} - u_{x}  &
   2u^2 + u_{xx} - \xi \xi_{x} + \frac12\veps^{-2}u  -
   \frac14\veps^{-4}&
     - \xi_{xx} - 2\xi u  - \frac12\veps^{-1} \xi_{x} - \frac12\veps^{-2}\xi \\
 - 2u  - \veps^{-2} &
   \frac12\veps^{-3} + u_{x} &
     - \xi_{x} - \xi \veps^{-1} \\
 - \xi_{x} + \xi \veps^{-1}  &
   \xi_{xx} + 2\xi u  - \frac12\veps^{-1} \xi_{x} +  \frac12\veps^{-2}\xi &
     \veps^{-3}
\end{pmatrix}.
\]

Let us construct the generalization $S^{N=1}\in SL(2\mid1)$ of
gauge transformation~\eqref{gaugeMiura2BVV} where we had $S\in
SL_2(\BBC)\simeq SL(2\mid0)$. Taking into account
Remark~\ref{remSL21dec}, we consider the ansatz
$S^{N=1}=S^{N=1}_{+}S^{N=1}_{0}S^{N=1}_{-}$,
where $S_{\alpha}\in G_{\alpha}$, $\alpha\in\{+,0,-\}$.
Bearing in mind that $SL_2(\BBC)\simeq
SL(2\mid0)\subset{GL(2\mid0)}$, we construct $S$ by using following
scheme:
\begin{enumerate}
\item we obtain an element $S^{N=1}_{0}$ by
the multiplication of $S$ from
right and left by some matrices from $GL(2|0)$;
\item we specify the matrices $S^{N=1}_{+}$ and $S^{N=1}_{-}$.
\end{enumerate}
We construct the matrix $S^{N=1}$ as follows
\begin{multline}
\label{n1gauge}
S^{N=1} =  \begin{pmatrix}
-1 & -\frac12\veps^{-1} & 0 \\
0 & \veps & 0 \\
0 & 0 & -\veps
\end{pmatrix} =
\\
{}=\underbrace{\begin{pmatrix}
  1 & 0 & 0 \\
  0 & 1 & 0 \\
  0 & 0 & 1 \\
\end{pmatrix}}_{S^{N=1}_{+}}
\underbrace{\begin{pmatrix}
i\sqrt{\veps} & i\sqrt{\veps}/\veps^2 &  0 \\
0 & i\sqrt{\veps} & 0 \\
0 & 0 & - \veps
\end{pmatrix}
\overbrace{\begin{pmatrix}
  i/\sqrt{\veps} & i/(2\veps\sqrt{\veps}) & 0\\
  0 & -i\sqrt{\veps} & 0 \\
  1 & 0 & 1 \\
\end{pmatrix}}^{S}
\begin{pmatrix}
1 & \veps^{-1} &  0 \\
0 & 1 & 0 \\
0 & 0 & 1
\end{pmatrix}
}_{S^{N=1}_{0}}
\underbrace{\begin{pmatrix}
  1 & 0 & 0 \\
  0 & 1 & 0 \\
  0 & 0 & 1 \\
\end{pmatrix}}_{S^{N=1}_{-}}.
\end{multline}

By applying the gauge transformation $S^{N=1}$ to the zero\/-\/curvature
representation $\alpha^{N=1}$, we obtain the gauge\/-\/equivalent
zero\/-\/curvature representation $\beta$ for the sKdV
equation~\eqref{sKdVComponents}:
\begin{equation}\label{n1beta}
\beta^{N=1} = (\alpha^{N=1})^{S^{N=1}} = A^{N=1}_2 dx + B^{N=1}_2 dt,
\end{equation}
where
\[
A^{N=1}_2 = \begin{pmatrix}
  0 & \veps^{-1} u  & \veps^{-1}\xi \\
  - \veps &  - \veps^{-1} &  0 \\
  0 & \xi & - \veps^{-1}
\end{pmatrix},
\]
\[
B^{N=1}_2 = \begin{pmatrix}
  u_{x} - u\veps^{-1} &
    \frac{1}{\veps}( - 2u^2 - u_{xx} + \xi \xi_{x})
- \frac{1}{\veps^{2}} u_{x} - \frac{1}{\veps^{3}}u &
    \frac{1}{\veps}( - \xi_{xx} - 2\xi u )
- \frac{1}{\veps^{2}} \xi_{x} - \frac{1}{\veps^{3}}\xi\\
2u \veps + \veps^{-1} &
   u_{x} + u \veps^{-1} + \veps^{-3} &
     \xi_{x} + \xi \veps^{-1} \\
- \xi_{;x}\veps  + \xi &
   - \xi_{xx} - 2\xi u &
     \veps^{-3}
\end{pmatrix}.
\]

Let us recall that formula~\eqref{wroelofs} yields the representation
of the matrices $A^{N=1}_2$ and $B^{N=1}_2$ in terms of vector
fields. By this argument, from the zero\/-\/curvature representation
$\beta^{N=1}$ we obtain the two\/-\/dimensional covering over the sKdV
equation~\eqref{sKdVComponents}; one of the two new nonlocal variables
is bosonic (let us denote it by~$\tu$)
and the other, $\txi$ is fermionic:
\begin{align*}
\tu_{x}  = {}& - \tu^2\veps + (\tu - u)\veps^{-1}  - \txi\xi,\\
\txi_{x}  = {} & - \txi\tu\veps + (\txi - \xi )\veps^{-1},\\
\tu_{t}  = {} &\frac{1}{\veps^3}(2\tu^2u \veps^4 + \tu^2\veps^2 - 2\tu u
\veps^2 - 2\tu u_{x}\veps^3 - \tu + 2u^2\veps^2 + u  + u_{xx}\veps^2 +
   u_{x}\veps  - \txi\tu\xi_{x}\veps^4 + \txi\xi_{xx}\veps^3 + {}\\
{}&{}+ \txi\xi \tu\veps^3 + 2\txi\xi u \veps^3 - \xi
\xi_{x}\veps^2),\\
\txi_t = {}& \frac{1}{\veps^3}( - \tu\xi_{x}\veps^3 + \xi_{xx}\veps^2 +
  \xi_{x}\veps  + 2\txi\tu u \veps^4 + \txi\tu\veps^2 - \txi u \veps^2
  - \txi u_{x}\veps^3 - \txi - \xi \tu\veps^2 + 2\xi u \veps^2 + \xi ).
\end{align*}
We now
express the local variables $u$ and $\xi$ from $\tu_x$ and $\txi_x$
and substitute them in $\tu_t$ and $\txi_t$. We thus obtain the
Gardner deformation~\cite{MathieuOpen} of the sKdV
equation~\eqref{sKdVComponents}:
\begin{align*}
\cE_{\veps} = \, \Bigr\{ & \tu_{t}=  \underline{6\tu^2\tu_x\veps^2 - 6\tu\tu_x - \tu_{xxx}} -
  3\txi\tu\txi_{xx}\veps^2 + 3\txi\txi{xx}- 3\txi\txi_x\tu_x\veps^2,\\
{}&\txi_t=  3\tu^2\txi_x\veps^2 - 3\tu\txi_x - \txi_{xxx} +
3\txi\tu\tu_x\veps^2 - 3\txi\tu_x \Bigl\},\\
\mathfrak{m}_{\veps} = \, \Bigr\{ & u =   \underline{\tu - \veps\tu_x}  + \veps^2(\txi\txi_x\veps^2 \underline{- \tu^2)},
\quad
\xi = \txi - \veps\txi_x  - \veps^2\txi\tu \Bigl\}:
\cE_{\veps}\to\cE_{\text{sKdV}}.
\end{align*}
The original Gardner deformation~\eqref{kdvdef} of the KdV
equation~\eqref{kdv} is underlined in the above formulas.

In Example~\ref{n1example} we generalized, to the graded case,
the scheme
of constructing Gardner's deformations from known zero\/-\/curvature
representations (see Example~\ref{exBVV2Gardner}).
\end{example}

\subsubsection{$N=2$ supersymmetric Korteweg\/--\/de Vries equation}
Let us consider the four\/-\/component generalization of the KdV
equation~\eqref{kdv}, namely, the $N{=}2$
supersymmetric Korteweg\/--\/de Vries equation (SKdV)~\cite{MathieuNew}:
\begin{equation}\label{SKdV}
\bu_t=-\bu_{xxx}+3\bigl(\bu\cD_1\cD_2\bu\bigr)_x
 +\frac{a-1}{2}\bigl(\cD_1\cD_2\bu^2\bigr)_x + 3a\bu^2\bu_x,\qquad
 \cD_i=\frac{\dd}{\dd\theta_i}+\theta_i\cdot\frac{\Id}{\Id x},
\end{equation}
where
\begin{equation}\label{N2superfield}
\bu(x,t;\theta_1,\theta_2)
=u_0(x,t)+\theta_1\cdot u_1(x,t)+\theta_2\cdot
u_2(x,t)+\theta_1\theta_2 \cdot u_{12}(x,t)
\end{equation}
is the complex bosonic super\/-\/field,
$\theta_1,\theta_2$ are Grassmann variables such that $\theta_1^2=\theta_2^2=\theta_1\theta_2 + \theta_2\theta_1 =
0$, $u_0$, $u_{12}$ are bosonic fields, and  $u_1$, $u_2$ are
fermionic fields. Expansion~\eqref{N2superfield}
converts~\eqref{SKdV} to the four\/-\/component system
\begin{subequations}\label{SKdVComponents}
\begin{align}
u_{0;t}&=-u_{0;xxx}+\bigl(a u_0^3
   -(a+2)u_0u_{12}+(a-1)u_1u_2\bigr)_x,
\label{GetmKdV}\\
u_{1;t}&=-u_{1;xxx}+\bigl(\phantom{+}(a+2)u_0u_{2;x}+(a-1)u_{0;x}u_2
   -3u_1u_{12}+3a u_0^2u_1 \bigr)_x,\\
u_{2;t}&=-u_{2;xxx}+\bigl(-(a+2)u_0u_{1;x}-(a-1)u_{0;x}u_1
   -3u_2u_{12}+3a u_0^2u_2 \bigr)_x,\\
\underline{u_{12;t}}&=\underline{-u_{12;xxx}-6u_{12}u_{12;x}}
 +3au_{0;x}u_{0;xx}+(a+2)u_0u_{0;xxx}\notag\\
 {}&{}\qquad{}+3u_1u_{1;xx}+3u_2u_{2;xx}
 +3a\bigl(u_0^2u_{12} -2u_0u_1u_2\bigr)_x.\label{GetKdV}
\end{align}
\end{subequations}
The KdV equation is underlined in~\eqref{GetKdV}. The SKdV equation is
most interesting (in particular, bi\/-\/Hamiltonian, whence
completely integrable) if $a\in\{-2, 1,4\}$,
see~\cite{MathieuNew,HKKW,KisHus09}.
Let us consider the bosonic limit $u_1=u_2=0$ of
system~\eqref{SKdVComponents}: 
by setting
$a=-2$ we obtain the triangular system which consists of the modified
KdV equation upon $u_0$ and the equation of KdV\/-\/type; in the
case $a=1$ we obtain the Krasil'shchik\/--\/Kersten system; for $a=4$,
we obtain the third equation in the Kaup\/--\/Boussinesq
hierarchy. A Gardner deformation of the Kaup\/--\/Boussinesq system
was constructed in~\cite{HKKW}.

The Gardner deformation problem for the $N=2$ supersymmetric $a=4$
KdV equation was formulated by P.~Mathieu in~\cite{MathieuNew}.
In the paper~\cite{HKKW} it was shown that one can not construct
such a deformation under the assumptions that, first, the
deformation is polynomial in $\cE$, second, it involves only the
super\/-\/fields but not their components, and third, it contains the
known deformation~\eqref{kdvdef} under the reduction $u_0=0$, $u_1=u_2=0$.
Therefore, we shall find a graded generalization of Gardner's
deformation~\eqref{kdvdef} for the system of four
equations~\eqref{SKdVComponents} treating it in components but not as
the single equation~\eqref{SKdV} upon the super\/-\/field.

The SKdV equation~\eqref{SKdVComponents} admits~\cite{Das} the
$\gsl(1\mid 2)$-\/valued zero\/-\/curvature representation
$\alpha^{N=2}= A dx + B dt$ such that
\begin{subequations}
\label{daszcr}
\begin{gather}
A = \begin{pmatrix}
    \eta  - iu_{0}  &
  \eta^2 - 2i\eta u_{0} - u_{0}^2 - u_{12}  &
  - u_{2} - iu_{1} \\
  1 &
  \eta  - iu_{0} &
  0 \\
  0 &
  u_{2} - u_{1}i &
  2\eta  - 2iu_{0}
\end{pmatrix},\\
B = \begin{pmatrix}
b_{11} & b_{12} & b_{13}  \\
b_{21} & b_{22} & b_{23}  \\
b_{31} & b_{32} & b_{33}
\end{pmatrix},
\end{gather}
\end{subequations}
where the elements of $B$ are as follows:
\begin{align*}
b_{11} = & - 4\eta^3 - 2\eta u_{0;x}i - 4u_{0}^3i + 6u_{0} u_{12} i +
4u_{0} u_{0;x} + u_{0;xx}i - u_{12;x} + 4u_{2}u_{1}i, \\
b_{12} = &  - 4\eta^4 + 4\eta^3u_{0} i + 2\eta^2u_{12}  - 4\eta
u_{0}^3i + 8\eta u_{0} u_{12} i + 2\eta u_{0;xx}i - 4u_{0}^4 -
2u_{0}^2u_{12}  - {} \\
{}&{}- 4u_{0} u_{0;xx} + 2u_{12}^2 - 4u_{0;x}^2 + u_{12;xx} -
u_{2}u_{2;x} + 4u_{2}u_{1}\eta i - 8u_{2}u_{1}u_{0}  - u_{1}u_{1;x},\\
b_{13} = & - \eta u_{2;x} - \eta u_{1;x}i - 5u_{0} u_{2;x}i + 5u_{0}
u_{1;x} + u_{2;xx} + u_{1;xx}i + 2u_{2}\eta^2 + 2u_{2}\eta u_{0} i - {}\\
{}&{}-8u_{2}u_{0}^2 + 2u_{2}u_{12}  - 4u_{2}u_{0;x}i + 2u_{1}\eta^2i -
2u_{1}\eta u_{0}  - 8u_{1}u_{0}^2i + 2u_{1}u_{12} i + 4u_{1}u_{0;x},\\
b_{21} = &  2( - 2\eta^2 - 2\eta u_{0} i + 2u_{0}^2 - u_{12} ) ,\\
b_{22} = &  - 4\eta^3 + 2\eta u_{0;x}i - 4u_{0}^3i + 6u_{0} u_{12} i - 4u_{0} u_{0;x} + u_{0;xx}i + u_{12;x} + 4u_{2}u_{1}i,\\
b_{23} = & u_{2;x} + u_{1;x}i - 2u_{2}\eta  - 4u_{2}u_{0} i -
2u_{1}\eta i + 4u_{1}u_{0},\\
b_{31} = & u_{2;x} - u_{1;x}i + 2u_{2}\eta  + 4u_{2}u_{0} i - 2u_{1}\eta i + 4u_{1}u_{0},\\
b_{32} = &  - \eta u_{2;x} + \eta u_{1;x}i - 5u_{0} u_{2;x}i - 5u_{0}
u_{1;x} - u_{2;xx} + u_{1;xx}i - 2u_{2}\eta^2 - 2u_{2}\eta u_{0} i
+{}\\
{}&{}+ 8u_{2}u_{0}^2 - 2u_{2}u_{12}  - 4u_{2}u_{0;x}i + 2u_{1}\eta^2i - 2u_{1}\eta u_{0}  - 8u_{1}u_{0}^2i + 2u_{1}u_{12} i - 4u_{1}u_{0;x},\\
b_{33} = & 2( - 4\eta^3 - 4u_{0}^3i + 6u_{0} u_{12} i + u_{0;xx}i + 4u_{2}u_{1}i).
\end{align*}

\begin{rem}\label{remz}
Let us recall that the vectors $Z$, $H$ and $E^{\pm}$ that belongs
to $\gsl(2\mid 1)$,
\[Z=\begin{pmatrix}
\frac12 & 0 & 0\\
0 & \frac12 & 0\\
0 & 0 & 1\\
\end{pmatrix},\quad
  E^{+} =
  \begin{pmatrix}
    0 & 1 & 0 \\
    0 & 0 & 0 \\
    0 & 0 & 0
  \end{pmatrix},\quad
  E^{-} =
  \begin{pmatrix}
    0 & 0 & 0 \\
    1 & 0 & 0 \\
    0 & 0 & 0
  \end{pmatrix},\quad
  H =
  \begin{pmatrix}
    \frac12 & 0 & 0 \\
    0 & -\frac12 & 0 \\
    0 & 0 & 0
  \end{pmatrix},
\]
generate a basis in $\gl(2,\BBC)$. The vector $Z$ commutes with
any other vector from $\gl(2,\BBC)$.
\end{rem}

The reduction $u_0=u_1=u_2=0$ converts zero\/-\/curvature
representation~\eqref{daszcr} to the $\gl(2,\BBC)$-\/valued
zero\/-\/curvature representation of the KdV equation~\eqref{kdv},
\begin{align*}
A_{\text{KdV}} = &\begin{pmatrix}
\eta & \eta^2 - u_{12} & 0 \\
1 & \eta & 0 \\
0 & 0 & 2\eta
\end{pmatrix},\\
B_{\text{KdV}} = & \begin{pmatrix}
- 4\eta^3 - u_{12;x} & - 4\eta^4 + 2\eta^2u_{12}  + 2u_{12}^2 +
  u_{12;xx} & 0 \\
2( - 2\eta^2 - u_{12} ) & - 4\eta^3 + u_{12;x} & 0 \\
0 & 0 & -8\eta^3
\end{pmatrix}.
\end{align*}
Taking into account Remark~\ref{remz}, we obtain the
$\gsl(2,\BBC)$-\/valued zero\/-\/curvature
representation~\eqref{zcrBVV}
for the KdV equation~\eqref{kdv} by omitting the summands $\eta\otimes
Z dx$ and $-4\eta^3\otimes Z dt$ in $A_{\text{KdV}}$ and
$B_{\text{KdV}}$ and by denoting $\eta^2 = \lambda$.

\begin{proposition}\label{propSKdVcov}
The $N{=}2$ supersymmetric $a{=}4$ Korteweg\/--\/de Vries
equation~\eqref{SKdVComponents} admits the covering that, under the
reduction $u_0=u_1=u_2=0$ of~\eqref{SKdVComponents} to the KdV
equation~\eqref{kdv}, reduces to the covering that contains, in the
form of~\eqref{miuracov}, the known Gard\-ner deformation of the KdV
equation~\eqref{kdv}.
\end{proposition}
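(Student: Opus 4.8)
The plan is to mimic, in the graded setting, the chain of manipulations that produced the covering~\eqref{miuracov} from the Lax/zero-curvature data of the KdV equation in Examples~\ref{exLaxZCR}--\ref{exBVV2Gardner}, but starting now from the $\gsl(1\mid2)$-\/valued zero\/-\/curvature representation~\eqref{daszcr} of Das for the $N{=}2$, $a{=}4$ SKdV equation~\eqref{SKdVComponents}. First I would record that, under the reduction $u_0=u_1=u_2=0$, the representation~\eqref{daszcr} collapses to the block\/-\/diagonal $\gl(2,\BBC)$-\/valued pair $(A_{\text{KdV}},B_{\text{KdV}})$ written above, which after discarding the $Z$-\/component (harmless by Remark~\ref{remz}) and setting $\eta^2=\lambda$ is precisely~\eqref{zcrBVV}; and~\eqref{zcrBVV} was already shown in Example~\ref{exGardnerCov} to be equivalent, via the substitution $p=1/w$, $\lambda=1/(4\veps^2)$, to the covering~\eqref{covBVV}, hence (via the gauge transformation~\eqref{gaugeMiura2BVV}) to~\eqref{miuracov}. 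So the reduction of the covering we build will automatically have the required property, provided the construction is \emph{natural} with respect to switching off $u_0,u_1,u_2$.

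The construction itself proceeds in two steps. Step one: find a gauge transformation $S^{N=2}\in SL(2\mid1)$, depending on the parameter and extending the classical $S$ of~\eqref{gaugeMiura2BVV}, that brings $\alpha^{N=2}$ to a zero\/-\/curvature representation $\beta^{N=2}=(\alpha^{N=2})^{S^{N=2}}$ in which the $A$-\/matrix has the shape needed for a Miura\/-\/type covering. Exactly as in Example~\ref{n1example}, I would use the Gauss\/-\/type decomposition $S^{N=2}=S^{N=2}_{+}S^{N=2}_{0}S^{N=2}_{-}$ from Remark~\ref{remSL21dec}, take for the even block the $SL(2\mid0)$-\/gauge~\eqref{gaugeMiura2BVV} already known to linearise the bosonic subsystem, and then fix the off\/-\/diagonal (fermion\/-\/coupling) entries so that the Maurer\/--\/Cartan identity~\eqref{zcrmsuper} is preserved; the spectral parameter $\eta$ is tied to $\veps$ by $\eta^2=\lambda=1/(4\veps^2)$, i.e. $\eta=\pm1/(2\veps)$, to match the classical case. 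Step two: apply the projective substitution~\eqref{wroelofs} to the transformed matrices $A^{N=2}_2,B^{N=2}_2$ to obtain a system of vector fields, hence differentiation rules for a collection of new nonlocal variables — one bosonic nonlocality $\tu$ coming from the $\gsl(2,\BBC)$\/-\/part and (at least) one fermionic nonlocality $\txi$ coming from the odd generators $F^{\pm},\bar F^{\pm}$ — and then rescale as in~\eqref{rcov}$\to$\eqref{mcov} and invert $p=1/w$ as in Example~\ref{exGardnerCov} so that the bosonic rule takes the form~\eqref{miuracovx}. Since each of these operations is the component\/-\/wise analogue of a step already carried out for KdV, restricting to $u_0=u_1=u_2=0$ commutes with all of them, giving the asserted reduction.

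The main obstacle is Step one: one must verify that a single gauge matrix $S^{N=2}\in SL(2\mid1)$ simultaneously (i) restricts to~\eqref{gaugeMiura2BVV} on the bosonic block, (ii) keeps the Berezinian equal to $1$, and (iii) produces a $\beta^{N=2}$ whose $A$-\/component is of the correct ``projective'' shape in all four components $u_0,u_1,u_2,u_{12}$ at once — the $a=4$ value being the delicate one, since it is precisely the case for which the naive superfield deformation fails~\cite{HKKW}. In practice this amounts to solving an overdetermined but (as in Example~\ref{exBVV2Gardner}) consistent system of algebraic equations for the entries of $S^{N=2}$, with the existence of a solution being the real content of the proposition; once such an $S^{N=2}$ is exhibited, the remaining computations (conjugation by $S^{N=2}$, passage to vector fields via~\eqref{wroelofs}, the rescaling and the inversion $p=1/w$) are routine and the reduction statement follows by inspection.
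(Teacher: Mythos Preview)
Your proposal is essentially the paper's proof: gauge-transform~\eqref{daszcr} by an $SL(2\mid1)$ extension of~\eqref{gaugeMiura2BVV}, apply the projective substitution~\eqref{projsub}, and note that the reduction $u_0=u_1=u_2=0$ commutes with every step. The paper is more direct on two points. First, your ``main obstacle'' is absent: the paper simply takes $S^{N=2}$ to be the matrix $S^{N=1}$ of~\eqref{n1gauge} verbatim (no overdetermined algebraic system is solved, and your condition~(iii) is not needed since the proposition asserts only the reduction property, not a full four-component Miura contraction). Second, after the gauge and the projective substitution~\eqref{wroelofs}, the bosonic nonlocality already satisfies~\eqref{miuracovx} under the reduction, so no further rescaling or inversion $p=1/w$ is required.
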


\begin{proof}
Let us extend the gauge transformation~\eqref{gaugeMiura2BVV}, which
was determined by the element $S$ of the Lie group $SL(2,\BBC)$.
We let
\begin{equation}\label{superBVV2Miura}
S^{N=2}=\begin{pmatrix}
  -1 & -\frac12\veps^{-1} & 0\\
  0 & \veps & 0 \\
  0 & 0 & -\veps
\end{pmatrix}.
\end{equation}
Acting by gauge transformation
~\eqref{superBVV2Miura}
on zero\/-\/curvature representation~\eqref{daszcr}, we obtain the
graded zero\/-\/curvature representation that contains the ``small''
zero\/-\/curvature representation which is originates
from~\eqref{miuracov} and gauge\/-\/equivalent to~\eqref{zcrBVV} for
the KdV equation~\eqref{kdv}. Specifically, we have that
\begin{subequations}\label{supergardnerzcr}
\begin{align}
A = &\begin{pmatrix}
  iu_0 &
    \veps^{-1}(u_0^2 + u_{12}) - i\veps^{-2}u_0 &
      \veps^{-1}(u_{2} - iu_{1})\\
 -\veps &
   iu_0 - \veps^{-1} &
     0 \\
 0 &
   u_2 + iu_1 &
     2iu_0 - \veps^{-1}
\end{pmatrix},\\
B = &\begin{pmatrix}
  b_{11} & b_{12} & b_{13} \\
  b_{21} & b_{22} & b_{23} \\
  b_{31} & b_{32} & b_{33}
\end{pmatrix},
\end{align}
\end{subequations}
where the elements of the matrix  $B$ are as follows
\begin{align*}
b_{11} ={} & 4iu_{0}^3 - 6iu_{0} u_{12}  + 4u_{0} u_{0;x} - iu_{0;xx}
- u_{12;x} - 4iu_{2}u_{1} + \veps^{-1} (2u_{0}^2 - u_{12}  - iu_{0;x})
- i\veps^{-2}u_{0} ,\\
b_{12}={} & \veps^{-1}(4u_{0}^4 + 2u_{0}^2u_{12}  + 4u_{0} u_{0;xx} -
2u_{12}^2 + 4u_{0;x}^2 - u_{12;xx} + u_{2}u_{2;x} + 8u_{2}u_{1}u_{0}
+ u_{1}u_{1;x}) + {}\\
{}&{}+ \veps^{-2}(2iu_{0}^3 - 4iu_{0} u_{12}  + 4u_{0}
u_{0;x} - iu_{0;xx} - u_{12;x} - 2iu_{2}u_{1}) + \veps^{-3} (u_{0}^2 -
u_{12}  - iu_{0;x})-{}\\
{}&{} - i\veps^{-4}u_{0} ,\\
b_{13} = {} & \veps^{-1}( - 5iu_{0} u_{2;x} - 5u_{0} u_{1;x} -
u_{2;xx} + iu_{1;xx} + 8u_{2}u_{0}^2 - 2u_{2}u_{12}  - 4iu_{2}u_{0;x}
- 8iu_{1}u_{0}^2 + {}\\
{}&{}  + 2iu_{1}u_{12}  - 4u_{1}u_{0;x}) + \veps^{-2} ( - u_{2;x} +
iu_{1;x} - 3iu_{2}u_{0}  - 3u_{1}u_{0} ) + \veps^{-3}( - u_{2} +
iu_{1}),\\
b_{21} = {} & 2\veps( - 2u_{0}^2 + u_{12} ) + 2i u_{0}  +
\veps^{-1},\\
b_{22} = {} & 4iu_{0}^3 - 6iu_{0} u_{12}  - 4u_{0} u_{0;x} -
iu_{0;xx} + u_{12;x} - 4iu_{2}u_{1} + \veps^{-1}( - 2u_{0}^2 + u_{12}
+ iu_{0;x}) + {} \\
{}&{}+i\veps^{-1} u_{0}  + \veps^{-3},\\
b_{23} = {} &  u_{2;x} - iu_{1;x} + 4iu_{2}u_{0}  + 4u_{1}u_{0}  +
\veps^{-1}(u_{2} - iu_{1}),\\
b_{31} = {} & \veps ( - u_{2;x} - iu_{1;x} + 4iu_{2}u_{0}  -
4u_{1}u_{0} ) + u_{2} + iu_{1}, \\
b_{32} = {} & 5iu_{0} u_{2;x} - 5u_{0} u_{1;x} - u_{2;xx} - iu_{1;xx}
+ 8u_{2}u_{0}^2 - 2u_{2}u_{12}  + 4iu_{2}u_{0;x} + 8iu_{1}u_{0}^2 -
2iu_{1}u_{12}  - {} \\
{}&{} - 4u_{1}u_{0;x} + \veps^{-1} u_{0} (iu_{2} - u_{1}),\\
b_{33}={} & 2(4iu_{0}^3 - 6iu_{0} u_{12}  - iu_{0;xx} - 4iu_{2}u_{1}) + \veps^{-3}.
\end{align*}

The projective substitution~\eqref{projsub} yields the
two\/-\/dimensional covering over the
$N{=}2$, $a{=}4$ SKdV equation. Under the reduction $u_0=u_1=u_2=0$,
the covering contains~\eqref{miuracov}, which is equivalent to
Gardner's deformation~\eqref{kdvdef} of the KdV equation~\eqref{kdv}.
The $x$-\/components of the derivation 
rules for the nonlocalites $w$ and $f$ are
\begin{align*}
w_x = {} &  \underline{- \veps w^2 + \veps^{-1} (w - u_{12} )}  -
fu_{2} - ifu_{1}  - \veps^{-1}u_{0}^2 - \veps^{-2}iu_{0},\\
f_x = {} &  - \veps wf - iu_{0} f + \veps^{-1} (f - u_{2} + iu_{1}),\\
\intertext{here and in what follows we underline the
covering~\eqref{miuracov} that encodes the ``small''
Gardner deformation for the KdV equation.
The $t$-\/components of the ``large''
covering over the $N{=}2$, $a{=}4$ SKdV are}
w_t = {} & \veps( - 4w^2u_{0}^2 \underline{+ 2w^2u_{12}}  - fwu_{2;x} - ifwu_{1;x}
+ 4ifu_{2}wu_{0}  - 4fu_{1}wu_{0} ) + 2iw^2u_{0}  + {} \\
{}&{}+ 8wu_{0} u_{0;x} \underline{- 2wu_{12;x}} - 5ifu_{0} u_{2;x} + 5fu_{0}
u_{1;x} + fu_{2;xx} + ifu_{1;xx} + fu_{2}w - 8fu_{2}u_{0}^2 + {}\\
{}&{}+ 2fu_{2}u_{12}  - 4ifu_{2}u_{0;x} + ifu_{1}w - 8ifu_{1}u_{0}^2 +
2ifu_{1}u_{12}  + 4fu_{1}u_{0;x} + \veps^{-1}(\underline{w^2} + 4wu_{0}^2 - {} \\
{}&{}\underline{- 2wu_{12}}  - 2iwu_{0;x} - 4u_{0}^4 - 2u_{0}^2u_{12}  - 4u_{0}
u_{0;xx} \underline{+ 2u_{12}^2} - 4u_{0;x}^2 \underline{+ u_{12;xx}} - ifu_{2}u_{0} + {}\\
{}&{} + fu_{1}u_{0}  - u_{2}u_{2;x} - 8u_{2}u_{1}u_{0}  -
u_{1}u_{1;x}) + \veps^{-2}( - 2iwu_{0}  - 2iu_{0}^3 + 4iu_{0} u_{12}
- 4u_{0} u_{0;x} + {} \\
{}&{}+iu_{0;xx} \underline{+ u_{12;x}} + 2iu_{2}u_{1}) + \veps^{-3} (\underline{ - w} -
u_{0}^2 \underline{+ u_{12}}  + iu_{0;x}) + \veps^{-4}iu_{0} ,\\
f_t = {} & 2\veps w( - 2fu_{0}^2 + fu_{12} ) + ( - wu_{2;x} +
iwu_{1;x} + 2ifwu_{0}  - 4ifu_{0}^3 + 6ifu_{0} u_{12}  + 4fu_{0}
u_{0;x} +{} \\
{}&{} + ifu_{0;xx} - fu_{12;x} + 4ifu_{2}u_{1} - 4iu_{2}wu_{0}  -
4u_{1}wu_{0} ) + \veps^{-1}(5iu_{0} u_{2;x} + 5u_{0} u_{1;x} +
u_{2;xx} - {}\\
{}&{}- iu_{1;xx} + fw + 2fu_{0}^2 - fu_{12}  - ifu_{0;x} - u_{2}w -
8u_{2}u_{0}^2 + 2u_{2}u_{12}  + 4iu_{2}u_{0;x} + iu_{1}w +{}\\
{}&{}+ 8iu_{1}u_{0}^2 - 2iu_{1}u_{12}  + 4u_{1}u_{0;x}) +
\veps^{-2}(u_{2;x} - iu_{1;x} - ifu_{0}  + 3iu_{2}u_{0}  + 3u_{1}u_{0}
) + {}\\
{}&{}+ \veps^{-3} ( - f + u_{2} - iu_{1}).
\end{align*}
This proves our claim.
\end{proof}

We finally remark that the reduction $u_0=0$, $u_1 = 0$ (and the
change of notation $u_2 \to \xi$, $u_{12} \to u$) maps this covering
over the $N{=}2$, $a{=}4$ SKdV equation to the covering
which was constructed in Example~\ref{n1example} for
the $N{=}1$ supersymmetric Korteweg\/--\/de Vries equation~\eqref{sKdV}.

\section*{Conclusion}
\noindent%
By now the Gardner deformation problem for the $N{=}2$ supersymmetric
$a{=}4$ Korteweg\/--\/de Vries equation (\cite{MathieuOpen}) is solved.
In this paper we have found the solution which is an alternative to our
previous result in~\cite{HKKW}. Namely, we introduced the nonlocal
bosonic and fermionic variables in such a way that the rules to
differentiate them are consistent by virtue of the super\/-\/equation
at hand and second,
the entire system retracts to the standard KdV
equation and the classical Gardner deformation for it (\cite{Miura68})
under setting to zero the fermionic nonlocal variable and
the first three components of the $N{=}2$ superfield in~\eqref{SKdV}.
At the same time, the nonlocal structure under
study is equivalent to the $\gsl(2\mid 1)$-\/valued zero\/-\/curvature
representation for this super\/-\/equation;
the zero\/-\/curvature representation contains the
non\/-\/removable spectral parameter, which manifest the integrability.


Our new solution of P.~Mathieu's Open problem~2
(see~\cite{MathieuOpen})
relies on the interpretation of both Gardner's deformations
and zero\/-\/curvature representations in similar terms, as a specific
type of nonlocal structures over the equation of motion. However, we
emphasize that generally
there is no one\/-\/to\/-\/one correspondence between the two
constructions, so that the interpretation of deformations in the
Lie\/-\/algebraic language is not always possible. Because this
correlation
between the two approaches to the 
integrability was
not revealed in the canonical formulation of the 
deformation
problem~\cite{MathieuOpen}, there appeared some attempts to solve it
within the classical scheme but the progress was partial.
Still, the use of zero\/-\/curvature representations in this
context could have given the sought deformation long~ago.

Let us also notice that projective substitution~\eqref{projsub}
correlates the super\/-\/dimension of the Lie algebra in a
zero\/-\/curvature representation for a differential equation with the
numbers of bosonic and fermionic nonlocalities over the same system:
a subalgebra of $\mathfrak{gl}(p\mid q)$ yields at most $p-1$ bosonic
and $q$~fermionic variables. This implies that, for a covering
over the $N{=}2$ supersymmetric KdV equation~\eqref{SKdV}
to extend the Gardner deformation~\eqref{kdvdef} in its classical sense
$\mathfrak{m}_\veps\colon\cE_\veps\to\cE$
(see~\cite{Miura68,MathieuNew,TMPh2006}), the extension~$\cE_\veps$
must be the system of evolution equations upon two bosonic and two
fermionic fields. Therefore, one may have to use the
$\mathfrak{sl}(3\mid 2)$-\/valued zero\/-\/curvature representations.
This outlines the working approach to a yet another method of solving
the Gardner deformation problem for the $N{=}2$ supersymmetric
Korteweg\/--\/de Vries systems~\eqref{SKdV}, which we leave as a new
open problem.

\subsubsection*{Acknowledgements}
The authors thank J.~W.~van de Leur, P.~Mathieu,
and M.~A.~Ne\-ste\-ren\-ko
for fruitful discussions and constructive criticisms. This research was
done in part while the first author was visiting at the MPIM (Bonn) and
the second author was visiting at Utrecht University; the hospitality
and support of both institutions is gratefully acknowledged. The
research of the first author was partially supported by NWO
grants~B61--609 and VENI 639.031.623 and performed at Utrecht
University.

\end{document}